\documentclass[conference]{IEEEtran}
\IEEEoverridecommandlockouts
\usepackage[T1]{fontenc}
\usepackage{subcaption}
\usepackage{amsmath}
\usepackage{amsfonts}
\usepackage{amsthm}
\usepackage{physics}
\usepackage{amsmath}
\usepackage{tikz}
\usepackage{mathdots}
\usepackage{yhmath}
\usepackage{cancel}
\usepackage{color}
\usepackage{siunitx}
\usepackage{array}
\usepackage{multirow}
\usepackage{amssymb}
\usepackage{gensymb}
\usepackage{tabularx}
\usepackage{extarrows}
\usepackage{booktabs}

\usepackage{amssymb}
\usepackage{latexsym}   
\usepackage{graphicx}
\usepackage{color}
\usepackage{hyperref}
\usepackage[linesnumbered,ruled,vlined]{algorithm2e}
\usepackage{multirow}

\usepackage{makecell}

\newcommand{\rmat}{\textsc{R-MAT}\xspace}

\newcommand{\TW}{\text{TW}\xspace}
\newcommand{\tec}{\text{Tectonic}\xspace}

\newcommand{\hide}[1]{}

\usepackage{colortbl}

\def\BibTeX{{\rm B\kern-.05em{\sc i\kern-.025em b}\kern-.08em
    T\kern-.1667em\lower.7ex\hbox{E}\kern-.125emX}}

\newtheorem{theorem}     {Theorem}

\newtheorem{proposition} {Proposition}

\begin{document}
\title{Parallel Motif-Based Community Detection} 


\author{\IEEEauthorblockN{Tianyi Chen}
\IEEEauthorblockA{Celonis Inc. \\
t.chen@celonis.com}
\and
\IEEEauthorblockN{Charalampos E. Tsourakakis}
\IEEEauthorblockA{relational\underline{AI} \\
charalampos.tsourakakis@relational.ai}
}

\maketitle
\begin{abstract}
Community detection is a central task in graph analytics. Given the substantial growth in graph size, scalability in community detection continues to be an unresolved challenge~\cite{shi2021scalable}. Recently, alongside established methods like Louvain~\cite{Blondel2008louvain} and Infomap~\cite{rosvall2008maps}, motif-based community detection has emerged~\cite{satuluri2011local,benson2016higher,Tsourakakis2017scalable}. Techniques like Tectonic~\cite{Tsourakakis2017scalable} are notable for their advanced ability to identify communities by pruning edges based on motif similarity scores and analyzing the resulting connected components.

In this study, we perform a comprehensive evaluation of community detection methods, focusing on both the quality of their output and their scalability. Specifically, we contribute an open-source parallel framework for motif-based community detection based on a shared memory architecture. We conduct a thorough comparative analysis of community detection techniques from various families among state-of-the-art methods,  including Tectonic~\cite{Tsourakakis2017scalable}, label propagation~\cite{usha2007near}, spectral clustering, Louvain~\cite{Blondel2008louvain},  LambdaCC~\cite{veldt2018correlation}, and Infomap~\cite{rosvall2008maps} on graphs with up to billions of edges.  A key finding of our analysis is that motif-based graph clustering provides a good balance between performance and efficiency. Our work provides several novel insights. Interestingly, we pinpoint biases in prior works in evaluating community detection methods using the top 5K groundtruth communities from SNAP only, as these are frequently near-cliques.  Our empirical studies lead to rules of thumb threshold picking strategies that can be critical for real applications.  Finally, we show that Tectonic can fail to recover two well-separated clusters. To address this, we suggest a new similarity measure based on counts of triangles and wedges (\TW) that prevents the over-segmentation of communities by Tectonic.

\end{abstract}

\section{Introduction}
Graph communities are groups of nodes that are densely connected inside and sparsely connected outside. Such structures typically cluster nodes with similar properties and have been used to detect protein complexes~\cite{gavin2006proteome}, make content recommendations~\cite{feng2015personalized}, and entity resolution~\cite{chierichetti2014correlation}.  Community detection is routinely performed in graphs  and forms a component of various commercial knowledge graph databases such as RelationalAI~\cite{rel} and Neo4j~\cite{neo4j} and has received a significant amount of attention from the research community~\cite{fortunato2010community}.  Community detection is evaluated according to two main criteria. The first one is the output quality, which is assessed either through groundtruth knowledge or through various measures such as modularity~\cite{Girvan2002community}, conductance~\cite{leskovec2008statistical} among several others~\cite{liu2019evaluation,fortunato2010community,orecchia2022practical}. Ground-truth knowledge is typically available on synthetic datasets and a limited number of real-world networks.     The second criterion is scalability. Due to the growing size of networks, scalability continues to be a significant challenge.  In recent years, a framework has emerged that, despite its simplicity, provides a good balance between scalability and output quality.  This framework was inaugurated by Satuluri et al.~\cite{satuluri2011local}. Their idea is to reweigh  each edge $e=(u,v)$ in the graph by the Jaccard similarity of the neighborhoods $N(u), N(v)$ respectively, i.e., $|N(u)\cap N(v)|\over |N(u)\cup N(v)|$ and then remove the low-weight edges before applying a community detection algorithm 
on the sparsified graph to achieve speed up.    
Tsourakakis et al.~\cite{Tsourakakis2017scalable}  proposed Tectonic, a similar heuristic that measures the similarity of two connected nodes $(u,v)$ by $t(u,v)\over \deg(u)+\deg(v)$, where $t(u,v)$ is the number of triangles edge $(u,v)$ participates in and $\deg(u)$ is the degree of node $u$.   After removing low-similarity edges, they output the connected components as clusters. Their work introduced a framework for evaluating community detection methods using precision and recall and showed impressive accuracies on graphs with known groundtruth communities, see also Shi et al.~\cite{shi2021scalable}.  It is worth noting that the Tectonic heuristic performs significantly better than other motif-based spectral clustering, which reweighs each edge by the number of participating triangles before applying standard spectral clustering~\cite{benson2016higher,Tsourakakis2017scalable}.

In this work, we provide an extensive, in-depth evaluation of scalable community detection methods. Our key finding is that motif-based community detection methods provide a good trade-off between output quality and scalability. We provide an open-source implementation of our method and other competitors in MPI. Additionally, we address a commonly overlooked issue in motif-based community detection methods: selecting the threshold parameter. We provide a robust rule of thumb for determining this parameter. We also pinpoint a community oversegmentation issue inherent to Tectonic~\cite{Tsourakakis2017scalable} and propose a new, intuitive heuristic \TW that performs well empirically. This heuristic relies solely on counting triangles and wedges, and we provide theoretical evidence to support its effectiveness.

We conduct an extensive evaluation of motif-based community detection methods including \TW and other major competitors on various real-world graphs with or without groundtruth. In addition, we also use synthetic \rmat~\cite{Chakrabarti2004RMATAR} graphs to evaluate scalability. As a secondary contribution, we uncover a hidden bias in the evaluation of several prior works~\cite{yang2012define,Tsourakakis2017scalable,shi2021scalable} that focus on the top 5k ground-truth communities from SNAP~\cite{snapnets}. These top communities tend to be cliques or near-cliques (see Figure~\ref{fig:dblp_com_density_compare}), which contrasts with the majority of other ground-truth communities. In a nutshell, we find that motif-based community detection methods, including our \TW-based framework, are a viable alternative to popular choices such as Louvain's method~\cite{Blondel2008louvain} or InfoMap~\cite{rosvall2008maps}.

\textbf{Notations.} We focus  on undirected, unweighted graphs $G=(V,E)$ with $n=|V|$ nodes and $m=|E|$ edges. For each node $u\in V$, we denote its neighbors as $N(u)$ and its degree as $\deg(u)=|N(u)|$. For any edge $e=(u,v)\in E$, the number of triangles and wedges it participates in are denoted as $t(u,v)=|N(u)\cap N(v)|$ and $wedge(u,v)=|N(u)\cup N(v)|-|N(u)\cap N(v)|-2$, respectively. We use $c_u$ to represent the community to which node $u$ belongs. We denote the arboricity of a graph $G$ as $\alpha (G)$, which is used to bound the complexity of clique listing tasks~\cite{chiba1985arboricity}.

\section{Related Work}
\label{sec:related}

\subsection{Community detection} Community detection, also known as graph clustering, is an extensively studied data mining task~\cite{fortunato2010community}. Various formulations and evaluation criteria have been proposed~\cite{fortunato2010community}. One of the most popular formulations is modularity optimization~\cite{Girvan2002community}. The modularity of a partition is defined as the difference between the number of edges within groups and the expected number of edges within groups under a null model.  The two foundational null models are the $G(n,p)$  random graph model and the configuration model~\cite{frieze2023random}, with the latter being more prevalent in practice due to the skewed degree distributions of real-world networks. Specifically, by assuming the configuration model, the probability of an edge $(i,j)$ is equal to $\frac{\deg(i) \deg(j)}{2m}$. Given the adjacency matrix $A$ of the graph, this yields that the modularity $
Q=\frac{1}{2m} \sum_{u,v} \big( A_{uv}-\frac{\deg(u) \deg(v)}{2m} \big)\delta (c_u,c_v)$, where
$\delta(c_u,c_v)$ equals 1 if $c_u=c_v$ and 0 otherwise.
While maximizing the modularity is NP-hard~\cite{brandes2007modularity}, many heuristics have been proposed, including Louvain~\cite{Blondel2008louvain}, one of the most successful community detection methods. Given its efficiency and accuracy~\cite{hric2014community}, practitioners widely use Louvain on various knowledge graph platforms such as Neo4j~\cite{neo4j_louvain} to solve real-world community detection challenges. Modularity optimization is also widely studied under the parallel setting, with either shared-memory~\cite{staudt2015engineering,halappanavar2017scalable,shi2021scalable} or distributed memory~\cite{zeng2015parallel,ghosh2018distributed}.

Veldt et al.~\cite{veldt2018correlation} recently proposed a community detection framework called LambdaCC by converting an unsigned graph into a signed graph and finding a community assignment that minimizes a modified correlation clustering objective~\cite{bansal2004correlation}. Their framework has a resolution parameter $\lambda$ that, for different values, gives important objectives such as the sparsest cut~\cite{arora2010logn} and modularity optimization as special cases.  Their framework is seminal and has provided not only theoretical insights but also state-of-the-art interpolation among the various well-known community detection objectives. Shi et al.~\cite{shi2021scalable} implemented the LambdaCC framework by designing a non-trivial parallel algorithm that resembles Louvain-like optimization for modularity.   This work experiments  with several heuristics to optimize the performance, including synchronous label updates, reducing the set of vertices to consider moving, and multilevel refinement. Their experimental results exhibited that their parallel framework has great parallel scalability and outcompetes other baselines in terms of output quality according to the ground truth communities. Other clustering objectives have been proposed and extensively studied. For instance, conductance measures the number of edges being cut by the clusters, normalized by the volume of clusters. While it is NP-hard to minimize graph conductance, Cheeger's inequality~\cite{alon1985lambda1} 
provides a polynomial time approximation algorithm ~\cite{ng2001spectral}. 

The map equation introduced by Rosvall et al.~\cite{rosvall2008maps}  encodes a random walker’s trajectory with codewords assigned to nodes, aiming to minimize the average per-step description length. It can be greedily optimized by InfoMap~\cite{mapequation2023software}, a refined framework of Louvain. Many community detection heuristics are inspired by the properties of the community structures rather than optimizing some objectives. For instance, label propagation was adapted as a heuristic initially by~\cite{usha2007near}. The authors described a simple procedure where nodes adopt the most frequent community assignment among their neighbors until stable communities emerge. 

In recent years, motif-based community detection methods have emerged. Notably, Benson et al. and Tsourakakis et al. introduced motif-based spectral clustering by defining the concept of motif conductance~\cite{benson2016higher,Tsourakakis2017scalable}.   In terms of well-performing methods in practice, Tectonic~\cite{Tsourakakis2017scalable} stands out and has been used as a key competitor in recent scalability studies~\cite{shi2021scalable}. It assigns to each edge a similarity score equal to $t(u,v)\over \deg(u)+\deg(v)$, then removes the low-similarity edges and outputs the connected components as clusters. A key advantage of this framework is that it relies on local graph characteristics that can be efficiently computed for massive graphs in different parallel or distributed architectures.  However, the authors of the original paper studied Tectonic only on graphs with groundtruth, where they could observe the precision and recall as the threshold changed. The choice of the threshold parameter is a challenge that has not been addressed in prior works, and we discuss it in this work.

\subsection{Parallel computation: model and primitives} The dynamic multithreading model represents a parallel computation as a directed acyclic graph (DAG), where vertices represent instructions and edges represent dependencies. The runtime scheduler can execute a computation only when its preceding nodes in the DAG are finished. We use the work-depth model in our algorithm analysis. The  {\em work} $W$ of computation is the number of vertices, and the  {\em depth} $D$ is the length of the longest path in the DAG. If $P$ processors are available, one can bound the running time by $O(W/P+D)$ using Brent's scheduling theorem~\cite{blelloch1996parallel}.
We introduce two parallel primitives used in this work~\cite{blelloch1996parallel}. 
\textit{Merge} takes two sorted sequences $A$ and $B$ of length $n$ and $m$, and returns a sorted sequence containing the union of the elements in $A$ and $B$. It can be implemented in $O(n+m)$ work and $O(\log (m+n))$ depth. It can also be modified to return the intersection of the elements of two sorted sequences in the same complexity. Sorting a sequence takes $O(n\log n)$ work and $O(\log ^{3/2}n)$ depth with high probability.


\subsection{Parallel algorithms} Shun and Tangwongsan~\cite{shun2015multicore} provided a parallel implementation of parameter-free triangle counting algorithm with $O(m^{3/2})$ work and $O(\log^{3/2}m)$ depth. The algorithm first ranks nodes based on their degrees, then transforms the graph into directed based on the rank and counts the triangles by intersecting adjacency lists. Algorithms for other architectures such as MapReduce are also available~\cite{biswas2020massively}.  Shun et al.~\cite{shun2014simple} proposed a parallel algorithm for finding connected components that take linear work and polylogarithmic depth. It repetitively applies a parallel low-diameter graph decomposition~\cite{miller2013parallel}  based on BFS and contracts the graph until no edges are left.

\section{Proposed Algorithms}
\label{sec:proposed}
\subsection{Motif-based Community Detection}


The motif-based community detection framework is summarized in Algorithm~\ref{alg:framework}. Initially, the framework sparsifies the input graph $G$ using a user-specified edge similarity function $sim$ and a threshold value $\delta$. Subsequently, it identifies connected components which constitute the final output. Table~\ref{tab:simfunctions} presents a selection of similarity functions we use and their computational complexity. Despite the fact that all measures are already well known in the literature except for \TW, we provide a detailed evaluation of them within the context of Algorithm~\ref{alg:framework}. We discuss these measures in the following. 

 \begin{algorithm}[]
\caption{Motif-based Community detection}\label{alg:framework}
\SetKwInOut{Input}{Input}
\SetKwInOut{Output}{Output}
\Input{\ $G=(V,E)$, threshold $\delta \in \mathbb{R}^+$, similarity function $sim : E \to \mathbb{R}^+$}
\Output{\ Graph partitions}
For each edge $e\in E$, if $\mathrm{sim}(e)<\delta$\ remove it from $G$\;
CCs $\leftarrow$ connected components of sparsified $G$\; 
\Return CCs
\end{algorithm}

\begin{table}[]
    \centering
    \footnotesize
    \begin{tabular}{|c|c|c|} \hline
      $sim$ function & Definition &  Complexity\\ \midrule 
       \begin{tabular}{@{}c@{}}Effective resistance \\ (Eff. Res.)~\cite{broder1989generating}\end{tabular}  & $-R_{eff}(u,v)$ & $O(m \sqrt{\log n})$  \\ 
       \begin{tabular}{@{}c@{}}Betweenness  \\ centrality (BC)\end{tabular}  & $-BC(u,v)$   & APSP~\cite{abboud2014subcubic}\\ 
        \midrule 
       Tectonic~\cite{Tsourakakis2017scalable}  & $\frac{t(u,v)}{deg(u)+deg(v)}$ &$O(m \alpha(G))$ \\
       Jaccard coeff.~\cite{satuluri2011local} &  $\frac{t(u,v)}{deg(u)+deg(v)-t(u,v)}$ & $O(m \alpha(G))$ \\  
       \begin{tabular}{@{}c@{}}\#  triangles ($K_3$) \\ \cite{sotiropoulos2021triangle,Tsourakakis2017scalable,benson2016higher}\end{tabular}  &  $t(u,v)$ & $O(m \alpha(G))$  \\ 
       \begin{tabular}{@{}c@{}}$\kappa$-clique participation \\ ~\cite{Tsourakakis2017scalable,benson2016higher}\end{tabular}  & $K_\kappa(u,v)$ & $O(m \alpha(G)^{\kappa-2})$ \\ \hline 
       {\bf TW} (this paper) & $t(u,v)-wedge(u,v)$ & $O(m \alpha(G))$ \\     \hline

    \end{tabular}
    \caption{Measures  we use as the similarity function $sim$ in Algorithm~\ref{alg:framework}. }
    \label{tab:simfunctions}
\end{table} 

We include two well-known similarity measures that are not a function of motif counts but are closely related to motifs. The effective resistance of an edge equals the probability of this edge being included in a random spanning tree of the graph~\cite{broder1989generating}. It is upper bounded by $\frac{2}{2+t(u,v)}$ as shown in~\cite{sotiropoulos2021triangle}. An edge within a community intuitively has lower effective resistance than an edge across two communities~\cite{alev2017graph}.  On the contrary, if the effective resistance is closer to one, the edge is more likely to connect two communities. It is worth pointing out that this intuition holds frequently but not always, see~\cite{sotiropoulos2021triangle}.   Edge betweenness is a measure used in network analysis to quantify the importance of an edge in a graph. It is defined as the fraction of all shortest paths that pass through this edge. Intuitively, edges across two communities are more likely to be used in the shortest paths, as they act as bridges between communities. Negations are taken on both functions in Table~\ref{tab:simfunctions} so they fit in Algorithm~\ref{alg:framework}.


Tectonic achieves competitive, state-of-the-art performance on community detection~\cite{Tsourakakis2017scalable}. Recall that the Jaccard similarity of an edge $(u,v)$ is simply the Jaccard coefficient of the neighborhoods $N(u), N(v)$, i.e., 

$$ J(e) = J(N(u), N(v)) = \frac{t(u,v)}{\deg(u)+\deg(v)-t(u,v)}. $$

We note that thresholding Jaccard similarity~\cite{satuluri2011local}  is equivalent to Tectonic up to a re-parameterization. Therefore, we do not include it as we use a wide range of thresholds for Tectonic.  We state this as the next  Proposition~\ref{prop:equiv}.  

\begin{proposition}\label{prop:equiv}
     Tectonic is a reparameterization of Jaccard edge similarity.
\end{proposition}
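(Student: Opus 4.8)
The plan is to exhibit an explicit, strictly increasing bijection carrying the Tectonic score of an edge to its Jaccard coefficient, and then to observe that such a bijection merely relabels the threshold in Algorithm~\ref{alg:framework} without changing its output. Fix an edge $e=(u,v)\in E$ and abbreviate $t=t(u,v)=|N(u)\cap N(v)|$ and $s=\deg(u)+\deg(v)\ge 2$. Write $T(e)=t/s$ for the Tectonic score and recall $J(e)=t/(s-t)$. The first step is the elementary observation that $s-t>0$: a common neighbor of $u$ and $v$ lies in $N(u)\setminus\{v\}$, so $t\le\deg(u)-1<s$; hence $T(e)\in[0,1)$ and $J(e)$ is a well-defined nonnegative real.

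The key step is the algebraic identity obtained by dividing the numerator and denominator of $J(e)$ by $s$:
\[
J(e)=\frac{t}{s-t}=\frac{t/s}{1-t/s}=\frac{T(e)}{1-T(e)}=\phi\bigl(T(e)\bigr),\qquad\text{where }\phi(x)=\frac{x}{1-x}.
\]
The map $\phi\colon[0,1)\to[0,\infty)$ is a strictly increasing bijection with inverse $\phi^{-1}(y)=y/(1+y)$. In particular, for all $x,\delta\in[0,1)$ one has $x<\delta\iff\phi(x)<\phi(\delta)$.

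Finally I would invoke the structure of Algorithm~\ref{alg:framework}: its output depends only on which edges survive the pruning step, and an edge $e$ survives precisely when $\mathrm{sim}(e)\ge\delta$. Applying the identity with $x=T(e)$ shows that, for any $\delta\in(0,1)$, running Algorithm~\ref{alg:framework} with $\mathrm{sim}=T$ and threshold $\delta$ removes exactly the edges $e$ with $\phi(T(e))<\phi(\delta)$, i.e.\ exactly those removed by running it with $\mathrm{sim}=J$ and threshold $\phi(\delta)=\delta/(1-\delta)$; thus the sparsified graphs, their connected components, and the final clusterings coincide. Conversely, a Jaccard threshold $\delta'$ matches the Tectonic threshold $\phi^{-1}(\delta')=\delta'/(1+\delta')$. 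This establishes that the two parameterized families of algorithms are identical, which is the claimed reparameterization. The only point requiring care is the strictness of the pruning inequality --- this is precisely why $\phi$ must be \emph{strictly} monotone --- together with the trivial bound $t<s$ ensuring $\phi$ is only ever evaluated inside its domain; there is no substantive obstacle.
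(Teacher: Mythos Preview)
Your proof is correct and takes essentially the same approach as the paper: both establish the identity linking the Tectonic and Jaccard scores via the monotone bijection $\phi(x)=x/(1-x)$ (equivalently, its inverse $\delta\mapsto\delta/(1+\delta)$ on thresholds) and conclude that the two pruning rules coincide up to relabeling the threshold. Your version is somewhat more careful---you verify $t<s$ so that $J(e)$ is well-defined and $\phi$ is applied inside its domain, and you make the role of strict monotonicity explicit---but the argument is the same.
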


\begin{proof}
    For any threshold value $\delta$ for Jaccard edge similarity, an edge $(u,v)$ is removed if  the following holds: 
    \begin{align}
      & \frac{t(u,v)}{\deg(u)+\deg(v)-t(u,v)}<\delta && \nonumber\\
      \rightarrow & (1+\delta)t(u,v)<\delta (\deg(u)+\deg(v)) && \nonumber \\
       \rightarrow & \frac{t(u,v)}{\deg(u)+\deg(v)}<\frac{\delta}{1+\delta} && \nonumber
    \end{align}
    
Observe that if we interpret $\delta$ as a probability, then the range of $\frac{\delta}{1 + \delta}$ is $[0,0.5]$. The equivalence is immediately derived from the bijectiveness of the function $f(\delta)=\frac{\delta}{1+\delta}$ in the interval $\delta \in [0,1]$.  

\end{proof}

Sotiropoulos and Tsourakakis~\cite{sotiropoulos2021triangle} used the upper bound $\frac{2}{2+t(u,v)}$ as a proxy for the effective resistance of an edge.  The number of triangles $t(u,v)$ were used for reweighing edges before applying spectral clustering as an efficient way to minimize the triangle conductance of a graph~\cite{benson2016higher,Tsourakakis2017scalable}.  
Another measure we experiment with is $\kappa$-clique participation~\cite{Tsourakakis2017scalable}, a generalization of triangle participation. Due to the computation complexity, in practice, we only consider $K_4$, i.e., a complete subgraph of size $4$.

\begin{figure}
    \centering
    \begin{subfigure}[]{0.25\linewidth}
        \centering
        \includegraphics[width=\linewidth]{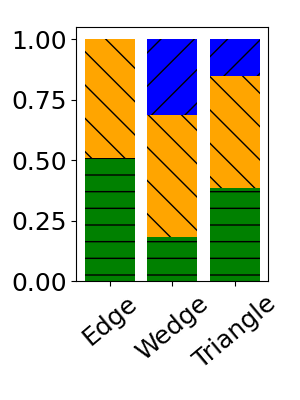}
        \caption{}
    \end{subfigure} \hfill 
    \begin{subfigure}[]{0.25\linewidth}
        \centering
        \includegraphics[width=\linewidth]{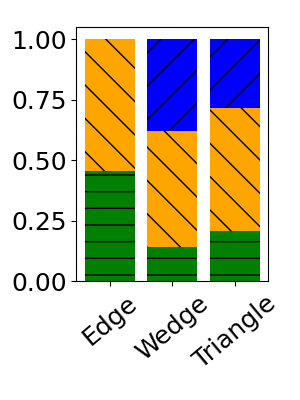}
        \caption{}
    \end{subfigure}  \hfill
    \begin{subfigure}[]{0.25\linewidth}
        \centering
        \includegraphics[width=\linewidth]{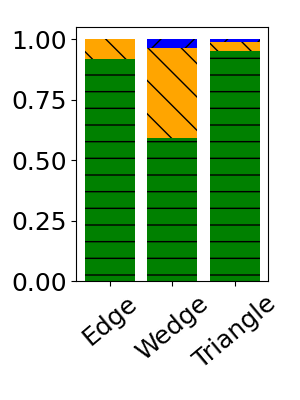}
        \caption{}
    \end{subfigure} 
    \begin{subfigure}[]{0.4\linewidth}
        \centering 
        \includegraphics[width=\linewidth]{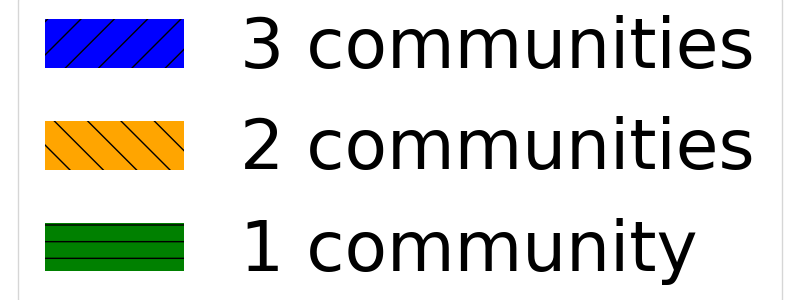}
    \end{subfigure} \hfill
     \begin{subfigure}[]{0.4\linewidth}
        \centering
        \includegraphics[width=\linewidth]{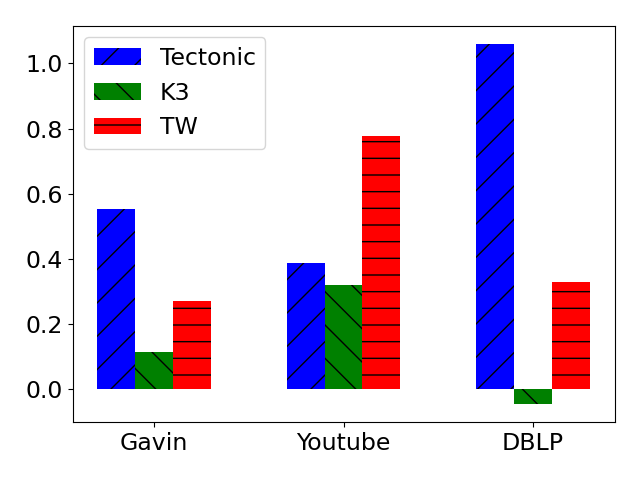}
        \caption{}
    \end{subfigure} 
    \caption{Fractions of motifs cut by groundtruth communities in (a) Gavin, (b) Youtube, and (c) DBLP. (d) The average difference between similarity scores of edges inside and across communities, normalized by the standard deviation. }
    \label{fig:motivation}
\end{figure}

Most efficient similarity measures are calculated based on the number of triangles participated. The intuition comes from the fact that the triangle is a significant network motif~\cite{chen2022algorithmic} in the communities, according to empirical studies~\cite{benson2016higher}. However, little research has looked into other small motifs, especially wedges, with few exceptions that focus on computational aspects of triangle counting~\cite{seshadhri2013triadic}. Figures~\ref{fig:motivation}(a)-(c) show the distributions of motifs of size at most three in three datasets with ground-truth communities, see Section~\ref{sec:exp} for details of the datasets.
While the proportions of motifs being cut vary greatly, we consistently observe that the fraction of wedges being cut is more than that of triangles with a non-trivial difference. This implies that edges participating in more wedges will likely go across communities and thus should receive lower similarity scores. Our new similarity function, called \TW, achieves this. Specifically, it assigns a score  to each edge $e=(u,v)$

\begin{equation}
    \label{eq:tw}
TW(e)=t(e)- wedges(e).    
\end{equation}


Figure~\ref{fig:motivation}(d) demonstrates the average difference inside and across communities between similarity scores, including Tectonic, the number of triangles, and \TW, respectively. We observe both \TW and \tec give strong signals on whether the edges are inside communities or across. Specifically, \TW outperforms Tectonic on Youtube, a dataset where triangles are located more across communities according to Figure~\ref{fig:motivation}(b). On the contrary, $K_3$s provide a weaker or even wrong signal in datasets like DBLP.



\subsection{Triangle-Wedges (\TW) Method: Theoretical Insights}\label{subsec:theory}

As mentioned earlier, \tec has performed remarkably on community detection benchmarks. Nonetheless, we show that it can fail even in a simple setting.  In particular, consider an instance of a stochastic blockmodel (SBM)~\cite{abbe2015exact} as follows:
there exist two community blocks $B_1, B_2$ each with $n$ nodes, i.e., $|B_1|=|B_2|=n$. Each pair of nodes $u,v$ within a block $B_i$ are connected with probability $p_i, i=1,2$. Any pair of nodes across the two blocks is connected with probability $q$. We assume that   $p_1,p_2>q$, but $p_1\neq p_2$ as in the standard SBM. 
The following theorem shows a range of parameters for which \tec fails, but \TW succeeds in recovering the groundtruth communities. It is worth emphasizing that this does not mean that \TW is always better than \tec but is a viable alternative with an interpretable intuition.  

\begin{theorem}
\label{thm:sbm}
Let $p_1=\frac{2\log n}{n}$ to ensure connectivity with high probability. Furthermore, let $q=p_1/2$ and $p_1 \ll p_2$. Then \TW recovers the two blocks, while Tectonic fails in expectation respectively. 
\end{theorem}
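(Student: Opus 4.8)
\emph{Proof sketch (plan).} The plan is to compute, for each of the three edge types---both endpoints in $B_1$, both endpoints in $B_2$, and one endpoint in each block---the expected value of the two similarity scores, and then compare them. Using $wedge(e)=\deg(u)+\deg(v)-2t(e)-2$, note first that $TW(e)=t(e)-wedge(e)=3t(e)-\deg(u)-\deg(v)+2$, so everything reduces to estimating degrees and the common-neighbourhood count $t(e)=|N(u)\cap N(v)|$. A vertex of $B_1$ has expected degree $(n-1)p_1+nq=3\log n+o(\log n)$, while a vertex of $B_2$ has expected degree $(n-1)p_2+nq=np_2+o(np_2)$ (using $p_2\gg q$). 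For $t(e)$ I would split the candidate common neighbours by the block they lie in: a $B_1$-internal edge has $\mathbb{E}[t(e)]=(n-2)p_1^2+nq^2=\Theta(\log^2 n/n)=o(1)$; a $B_2$-internal edge has $\mathbb{E}[t(e)]=(n-2)p_2^2+nq^2=np_2^2+o(np_2^2)$; and a cross edge $u\in B_1,v\in B_2$ has $\mathbb{E}[t(e)]=(n-1)q(p_1+p_2)=p_2\log n+o(p_2\log n)$, the dominant contribution coming from the $\Theta(\log n)$ neighbours of $u$ that lie inside $B_2$.

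Substituting, the expected $TW$ scores are $-6\log n+o(\log n)$ on $B_1$-internal edges, $np_2(3p_2-2)+o(np_2)$ on $B_2$-internal edges, and $-np_2+o(np_2)$ on cross edges; for $p_2$ a constant bounded away from $0$ these are ordered $(\text{cross})<(B_1\text{-internal})$ and $(\text{cross})<(B_2\text{-internal})$, with a gap of order $np_2$. The expected Tectonic scores, by contrast, are $\tfrac{5\log n}{6n}(1+o(1))$ on $B_1$-internal edges, $\tfrac{p_2}{2}(1+o(1))$ on $B_2$-internal edges, and $\tfrac{\log n}{n}(1+o(1))$ on cross edges (these are the ratios of the corresponding expectations, which is legitimate here since the degrees concentrate to within a $1+o(1)$ factor of their means). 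The key point is that a cross edge has a \emph{strictly larger} expected Tectonic score than a $B_1$-internal edge---about $\tfrac65$ times larger. Hence, working with the expected scores, for every threshold $\delta$ the graph Tectonic keeps is wrong: if $\delta\le\tfrac{5\log n}{6n}(1+o(1))$ the cross edges survive and, since each $B_1$-vertex retains $\Theta(\log n)$ edges into the internally connected $B_2$-side, the two blocks merge into one component; and if $\delta$ exceeds that level then essentially all $B_1$-internal edges are deleted and $B_1$ shatters into tiny pieces. Either way the output is not $\{B_1,B_2\}$, which is the claimed failure.

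For the positive direction I would upgrade the expectation bounds to high-probability ones and fix any threshold $\delta$ strictly between the cross level and the minimum of the two internal levels (for instance $\delta=-\tfrac12 np_2$ when $p_2\ge\tfrac23$, and a suitable $\delta=-cnp_2$ with $0<c<1$ otherwise). A Chernoff bound plus a union bound over the $2n$ vertices shows all degrees are $(1+o(1))$ times their means, so inside $B_1$ the maximum degree is at most $6\log n$ w.h.p.; hence every $B_1$-internal edge has $TW\ge -12\log n>\delta$ and is kept. Each $B_2$-internal edge has $t(e)=np_2^2(1+o(1))$ w.h.p.\ ($t(e)$ being a sum of independent indicators with mean $np_2^2=\omega(\log n)$, so Chernoff plus a union bound over $\binom{n}{2}$ pairs applies), giving $TW=np_2(3p_2-2)(1+o(1))>\delta$, so it is kept. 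Each cross edge has $\deg(v)=np_2(1+o(1))$ for its $B_2$-endpoint $v$ and $t(e)=O(p_2\log n)=o(np_2)$ w.h.p., giving $TW=-np_2(1+o(1))<\delta$, so it is removed. After sparsification the $B_1$-side is the full random graph $G(n,p_1)$ with $p_1=2\log n/n$, which is connected w.h.p.\ (this is exactly why $p_1$ is placed at twice the connectivity threshold), the $B_2$-side contains the connected $G(n,p_2)$, and no cross edge survives; so the connected components are precisely $B_1$ and $B_2$.

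The main obstacle is not any single hard estimate but the need for \emph{uniform} tail control: every one of the $\Theta(n\log n)$ $B_1$-internal edges must stay above $\delta$ and every one of the $\Theta(n\log n)$ cross edges must fall below it. For the sparse ($B_1$-internal and cross) edges this is easy, because the $TW$ score there is dominated by $-\deg(u)-\deg(v)$ and the triangle term needs only a crude one-sided bound; the one genuinely quantitative concentration is for $B_2$-internal triangle counts, which is standard since $np_2^2\to\infty$. A secondary point worth stating explicitly in the write-up is that the hypothesis really needs $p_2=\Omega(1)$, not merely $p_1\ll p_2$: if $p_2=o(1)$ then the $B_2$-internal $TW$ level $np_2(3p_2-2)\approx -2np_2$ drops below the cross level $-np_2$, and no separating threshold exists.
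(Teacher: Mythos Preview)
Your approach is essentially the same as the paper's: compute the expected similarity score for each edge type (inside $B_1$, inside $B_2$, across) and show that for Tectonic the cross-edge level sits \emph{above} the $B_1$-internal level (so no threshold separates), while for \TW\ the internal levels both sit above the cross level. The paper works purely in expectation and linearizes the Tectonic criterion as $Tec_\delta(u,v)=t(u,v)-\delta(\deg(u)+\deg(v))$; you treat the ratio directly using degree concentration, which is equivalent. You go somewhat beyond the paper by (i) checking all three edge types rather than only the $B_1$-internal vs.\ cross gap, and (ii) sketching the Chernoff/union-bound upgrade from expectation to a w.h.p.\ statement---both are sound and a genuine improvement over what the paper writes down.

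Your closing caveat is correct and in fact sharper than you state: the extra hypothesis needed is $p_2>1/3$, not merely $p_2=\Omega(1)$. For any constant $p_2\in(0,1/3)$ the $B_2$-internal \TW\ level $np_2(3p_2-2)$ is already below the cross level $-np_2$, so no separating $\delta$ exists; your ``for $p_2$ a constant bounded away from $0$'' claim in the second paragraph is thus too optimistic in that range, and your concrete choice $\delta=-cnp_2$ only works once $p_2>1/3$. The paper's proof never compares the $B_2$-internal level against the cross level, so it silently assumes this regime as well.
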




\begin{table*}[]
    \centering
    \caption{Expectation of various measures for each edge in an inhomogeneous SBM edge depending on the communities of its endpoints, normalized by $n$. We further assume $n-2 \approx n$ since we care about asymptotics. }
    \label{tab:expected}
    \begin{tabular}{c|c|c|c|c|c}
         & \# triangles & \# wedges & $d_u+d_v$ & $\TW(u,v)$ & $Tec_{\delta}(u,v)$ \\ \hline
        \makecell{Inside\\ $B_1$} & $p_1^2+q^2$ & $2(p_1(1-p_1)+q(1-q))$ & $2(p_1+q) $ & \makecell{$3p_1^2-2p_1$\\ $+3q^2-2q$} & \makecell{$p_1^2+q^2$\\ $-2\delta (p_1+q)$} \\ \hline
        \makecell{Across\\ $B_1$, $B_2$} & $p_1q+p_2q$ & \makecell{$p_1(1-q)+p_2(1-q)+$ \\ $q(1-p_1)+q(1-p_2) $} & $p_1+p_2+2q$ & \makecell{$3p_1q+3p_2q$\\ $-p_1-p_2-2q$}  & \makecell{$p_1q +p_2q$\\ $-\delta (p_1+p_2+2q)$} \\
    \end{tabular}
\end{table*}

\begin{figure*}[htbp]
    \centering
    \begin{subfigure}[]{0.22\linewidth}
        \centering
        \includegraphics[width=\linewidth]{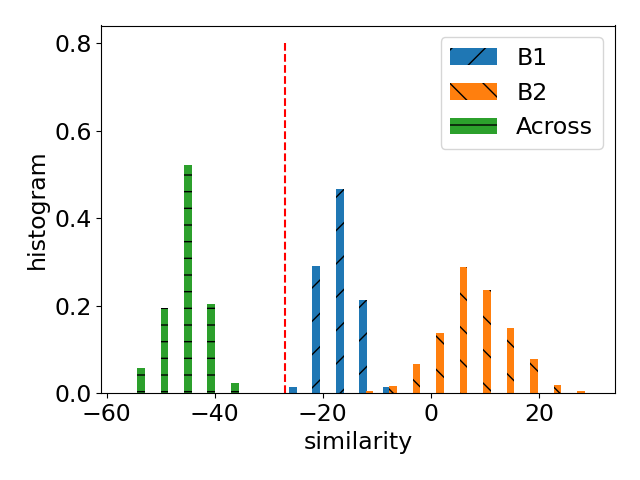}
        \caption{TW}
    \end{subfigure} \hspace*{0.02\textwidth}
    \begin{subfigure}[]{0.22\linewidth}
        \centering
        \includegraphics[width=\linewidth]{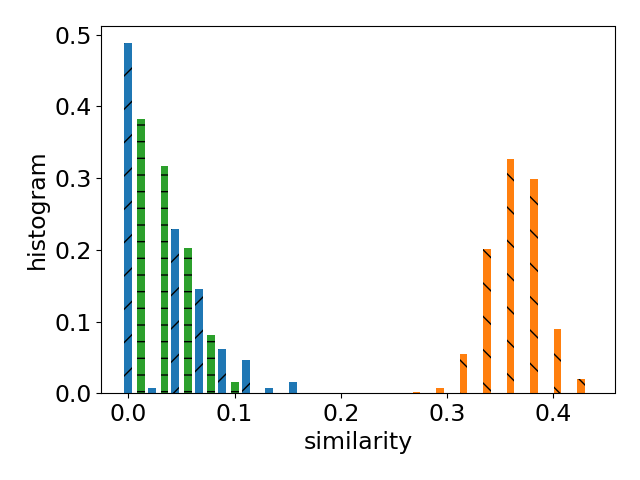}
        \caption{Tectonic}
    \end{subfigure}   \hspace*{0.02\textwidth}
    \begin{subfigure}[]{0.22\linewidth}
        \centering
        \includegraphics[width=\linewidth]{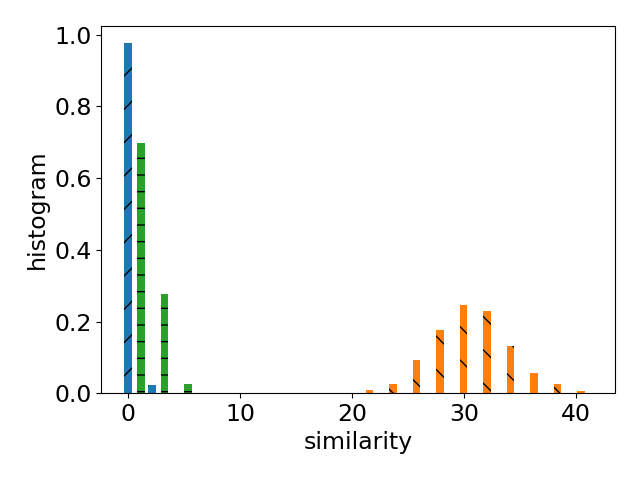}
        \caption{$K_3$}
    \end{subfigure}  \hspace*{0.02\textwidth}
    \begin{subfigure}[]{0.22\linewidth}
        \centering
        \includegraphics[width=\linewidth]{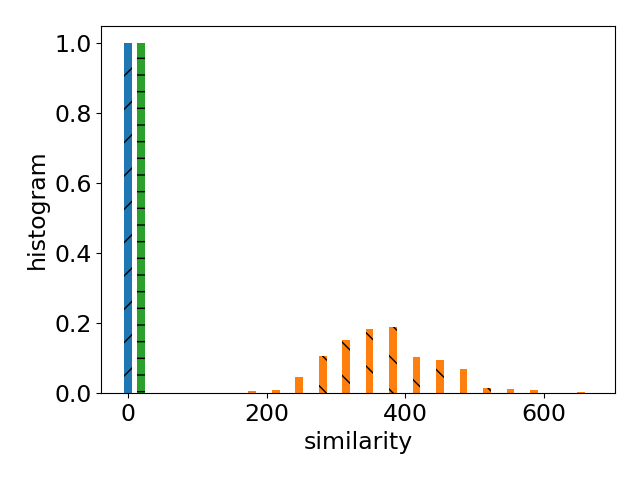}
        \caption{$K_4$}
    \end{subfigure}  \hspace*{0.02\textwidth}
    
    \begin{subfigure}[]{0.22\linewidth}
        \centering
        \includegraphics[width=\linewidth]{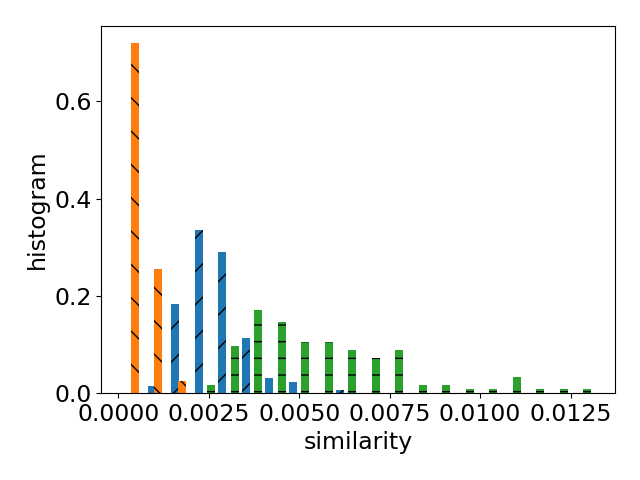}
        \caption{BC}
    \end{subfigure}  \hspace*{0.02\textwidth}
    \begin{subfigure}[]{0.22\linewidth}
        \centering
        \includegraphics[width=\linewidth]{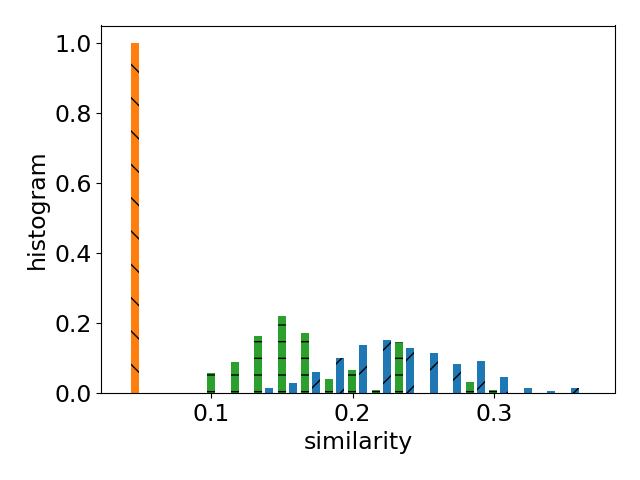}
        \caption{Eff. Res.}
    \end{subfigure}  \hspace*{0.02\textwidth}
    \begin{subfigure}[]{0.22\linewidth}
        \centering
        \includegraphics[width=\linewidth]{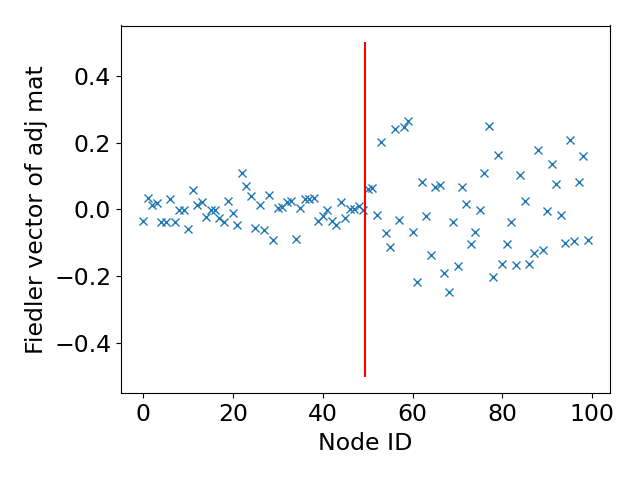}
        \caption{Spectral clustering}
    \end{subfigure}  \hspace*{0.02\textwidth}
    \begin{subfigure}[]{0.22\linewidth}
        \centering
        \includegraphics[width=\linewidth]{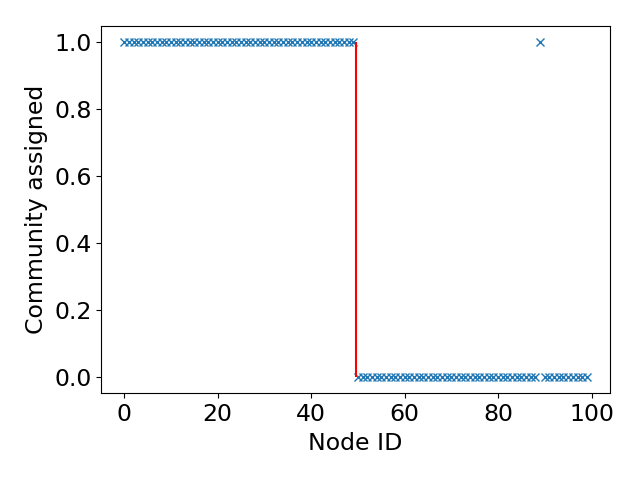}
        \caption{Louvain}
    \end{subfigure} \hspace*{0.02\textwidth}
    
    \caption{Comparison of community detection methods regarding output quality on an imbalanced SBM. Only \TW can fully recover the groundtruth communities. Legends are shared across subfigures.}
    \label{fig:syn_compare}
\end{figure*}

\begin{proof}
    
Recall that Tectonic removes  an edge $(u,v)$  if $\frac{t(u,v) }{d(u)+d(v)} < \delta $. Thus, we   define the tectonic score $Tec_{\delta}(u,v) = t(u,v) - \delta (d(u)+d(v))$. Table~\ref{tab:expected} provides a summary of the similarity score of each edge in expectation for the Tectonic and \TW scores, respectively, depending on the block participation of its endpoints $u,v$.  

\noindent {\it Tectonic:} Observe that Tectonic recovers the blocks in expectation if the similarity scores of the edges across the two blocks and the edges within blocks are well separated by some threshold. Assume there is such a threshold $\delta$ satisfies the condition.  This is equivalent to having $E[Tec_{\delta}(u,v)]>0$ for any $u,v\in B_i$, and $E[Tec_{\delta}(u,v)]<0$ for any $u\in B_1,v\in B_2$. Specifically, for any edge across two communities,

\begin{align*}
     E[Tec_{\delta}(u,v); u \in B_1, v \in B_2] &  = p_2q+2q^2-\delta(p_2+4q) <0  \rightarrow  \\ 
    \delta &> \frac{p_2q+2q^2}{p_2+4q} \overset{n \to \infty}{\longrightarrow} q
\end{align*}

\noindent For any edge induced by block $B_1$,

\begin{align*}
    E[Tec_{\delta}(u,v); u,v \in B_1] &  = 5q^2 - 6\delta  q
\end{align*}

\noindent By choosing $\delta>q$, 
\begin{align*}
    E[Tec_{\delta}(u,v); u,v \in B_1] & < -q^2<0
\end{align*}

\noindent Thus, Tectonic fails.

\underline{TW:} We will show that a range of thresholds exists to recover the communities in expectation. This is equivalent to having a positive gap in expectation between \TW of edges inside the block $B_1$ and across the two blocks. Specifically, the expected gap is $n(p_2-p_1) - 3n(p_1(q-p_1) + q(p_2-q))$. Given the assumption that $p_1=o(p_1)$, this gap value is dominated by $np_2$, making it positive. Therefore, TW can recover the communities in expectation.
 
\end{proof}

We visualize the analysis in Figure~\ref{fig:syn_compare}. Graphs are sampled from SBM with $n=50,p_1=0.1,p_2=0.8,q=0.05$. In Figure (a), we observe that for \TW, there is a threshold marked with a red dashed line that can clearly distinguish edges within and across the blocks. For \tec, the similarity scores of edges across the blocks are higher than the ones within the sparser block $B_1$, aligning with our analysis that Tectonic cannot recover the blocks in this case. A similar pattern is observed for $K_3$. Any higher-order clique \textit{sim} function shows no meaningful result as edges across the blocks and within the block $B_1$ have no participation. In addition, we test the ability of two other similarity functions that are not motif-based, i.e., betweenness centrality and effective resistance. We also evaluate spectral clustering, which has theoretical guarantee by Cheeger's inequality. Empirical results show only edge betweenness centrality can partially reveal the community structure by detecting edges across communities with higher similarity scores. Spectral clustering fails to return any insightful result due to the sparsity of $B_1$.  Louvain, one of the most frequently used community detection algorithms in practice, almost perfectly recovers the community structure (i.e., except for a couple of nodes).


\subsection{Parallel implementation} 


The motif-based framework consists of two components, i.e., computing the edge similarity scores and finding the connected components in the reweighted graph. For most of the scalable similarity functions highlighted in Table~\ref{tab:simfunctions}, the primary challenge in parallelization lies in the computation of triangle (or wedge) participation for edges. As a preprocessing step, we first sort the adjacency list by node id (wlog $V=[n]$), which takes $O(m\log m)$ work and $O(\log^{3/2}m)$ depth. Then, for each edge $(u,v)$, we directly compute its triangle participation as the size of neighborhood intersection $|N(u)\cap N(v)|$ and distribute the work for all edges in parallel. As mentioned in Section~\ref{sec:related}, computing the intersection of the neighborhoods with a sorted adjacency list takes $O(n)$ work and $O(\log n)$ depth. Note for computing the wedge participation, we need an extra step to compute the size of the neighborhood union, which takes the same work and depth as the intersection and thus does not affect the overall complexity. To compute the connected components, we apply the parallel randomized algorithm proposed by Shun et al.~\cite{shun2014simple} that takes linear work and $O(\log^3 n)$ depth with high probability. In all, our implementation of \TW requires $O(nm)$  work and $O(\log^3 n)$ depth. 

While the state-of-the-art algorithm for triangle counting takes $O(m^{3/2})$ work and $O(\log^{3/2}m)$ depth in the worst-case analysis, we argue such an algorithm requires extra memory in the case of the motif-based framework to count the number of triangles that participated for each edge. Assume triangle counts of all edges are stored in an array $C$ of $m$ entries. For each triangle composed of $e_1,e_2,e_3$, we have to update the array $C$ three times by adding one to each corresponding entry. To parallelize the work in counting and avoid writing collisions, we have to extend the array $C$ into two dimensions, i.e., $C\in \mathcal{Z} ^{|E|\times k}$ where $k$ is the number of workers. This can easily cause memory overflow when the graph size is large.

\section{Experiments}
\label{sec:exp}

\subsection{Setup}



\noindent \textbf{Datasets.} We assess our framework and other competing approaches on various real-world graphs, both with and without known community structures. The characteristics of these graphs are detailed in Tables~\ref{tab:datasum}.  For the datasets with ground truth, we utilize six social and information networks from~\cite{yang2012define} that come with their respective overlapping groundtruth communities. In addition, we include Gavin~\cite{gavin2006proteome} and EXTE~\cite{Krogan2006GlobalLO1}, which are protein-protein interaction (PPI) networks of yeast cells and bacteria, respectively.  

\begin{table}
	\scriptsize
\centering
	\caption{Dataset statistics summary}
	\label{tab:datasum}
	\begin{tabular}{c|c|c|c|c}
		Name & \# nodes & \# edges & avg. degree &  \# communities \\
		\midrule
		Gavin & 1\,727 & 7\,534 & 8.72 & 247 \\
		EXTE & 3\,642 & 14\,300 & 7.84 & 307 \\
		Amazon (AM) & 334\,863 & 925\,872  & 5.52  & 40\,192 \\
		DBLP (DB) & 317\,080 & 1\,049\,866  & 6.62 & 13\,477 \\
		Youtube (YT) & 1\,134\,890 & 2\,987\,624  & 5.26 & 4\,498 \\
		LiveJournal (LJ) & 3\,997\,962 &	34\,681\,189  & 17.34 & 145\,133 \\
            Orkut (OR) & 3\,072\,441 & 117\,185\,083  & 76.28 & 903\,707 \\
		Friendster (FR) & 65\,608\,366 &	1\,806\,067\,135 & 55.06  & 650\,021 \\ \hline 
          Government~\cite{snapnets}	& 7\,057	& 89\,455 & 25.35 & - \\
        New-sites & 27\,917 & 206\,259 & 14.78 & -  \\
        Co-purchase & 403\,394 & 3\,387\,388 & 16.79 &- \\
        \begin{tabular}{@{}c@{}}Commoncrawl-  \\ host~\cite{commoncrawl}\end{tabular}  & 89\,247\,739  & 1\,940\,007\,864  & 43.47 & - \\
	\end{tabular}
\end{table}

\noindent \textbf{Evaluation.} There is a subtle issue in the evaluation. Some datasets come with overlapping groundtruth communities but all algorithms applied output a non-overlapping partition. To this end, we design the following framework that matches communities by maximizing Jaccard similarity. Specifically, given a set of groundtruth communities $\mathcal{C}=\{C_1,C_2 \dots, C_k \}$ and non-overlapping communities output by an algorithm $\mathcal{S}=\{S_1, S_2\dots, S_{k'} \}$, we find for each $S_i\in \mathcal{S}$ a groundtruth community $C^\star_k$ that maximizes the  Jaccard similarity, i.e., $C_k^\star = \arg\max_{C_j \in \mathcal{C}}\frac{|C_j \cap S_i|}{|C_j\cup S_i|}$. Precision and recall are then computed between $C_k^\star$ and $S_i$ as $p_i=\frac{|C_k^\star \cap S_i|}{|S_i|}$ and $r_i=\frac{|C_k^\star\cap S_i|}{|C_k^\star|}$, respectively. After matching all $S_i$'s with their groundtruth communities, the final evaluation metrics are calculated via weighted averaging by size, i.e., precision as $\frac{\sum_i p_i \cdot |S_i|}{\sum_i |S_i|}$ and recall as  $\frac{\sum_i r_i \cdot |S_i|}{\sum_i |S_i|}$. F1 score is computed as $\frac{\sum_i 2\cdot p_i\cdot r_i\cdot |S_i|/(p_i+r_i) }{\sum_i |S_i|}$. Remark that, unlike previous works~\cite{Tsourakakis2017scalable,shi2021scalable}, our evaluation framework does not guarantee that precision and recall will be monotone concerning the resolution parameter. In the Appendix, we provide a toy example that illustrates this claim. 

\noindent \textbf{Groundtruth communities for SNAP datasets.} For each SNAP dataset, we evaluate any result based on the whole set of groundtruth communities, except the ones with a size of less than 3. Note this is different from the previous studies that usually take the top 5000 communities as the ground truth. This is because the metrics used in~\cite{yang2012define} to find top communities are separability, density, cohesiveness, and clustering coefficient. Such measurements naturally constrain the community structures to follow their hypothesis and lose the value of using ground truth data. For example, Figure~\ref{fig:dblp_com_density_compare} shows more than one-third of the top 5000 communities selected in the three datasets are cliques or near-cliques, i.e., subgraphs with edge densities close to 1. In contrast, the real ground truth communities have edge densities ranging more widely. Therefore, the downstream evaluations favor methods that generate clique-like community structures. 

\begin{figure*}
    \centering
    \begin{subfigure}[]{0.3\linewidth}
        \centering
        \includegraphics[width=\linewidth]{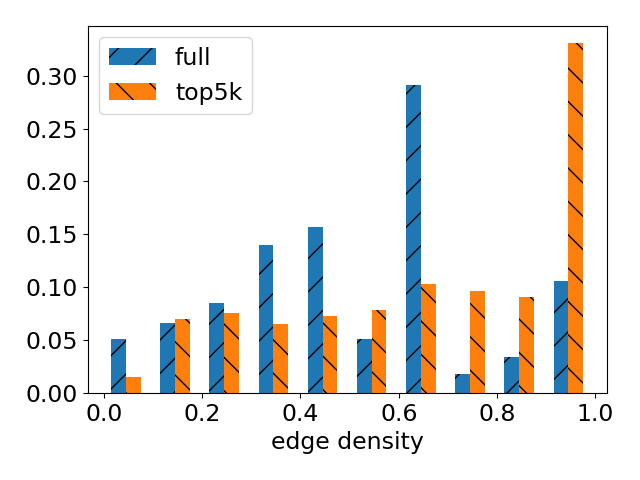}
        \caption{}
    \end{subfigure} \hfill
    \begin{subfigure}[]{0.3\linewidth}
        \centering
        \includegraphics[width=\linewidth]{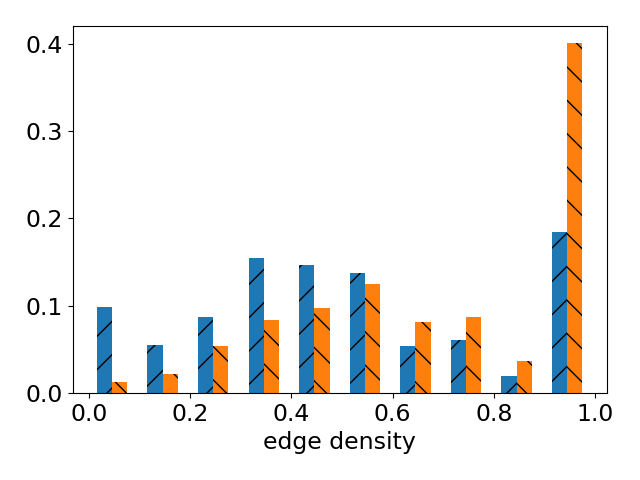}
        \caption{}
    \end{subfigure} \hfill
    \begin{subfigure}[]{0.3\linewidth}
        \centering
        \includegraphics[width=\linewidth]{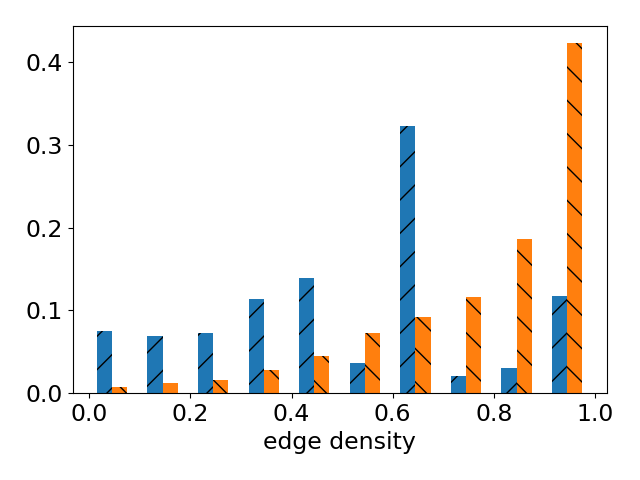}
        \caption{}
    \end{subfigure}
    \caption{Histograms of community densities of (a) Amazon, (b) DBLP, (c) LiveJournal.}
    \label{fig:dblp_com_density_compare}
\end{figure*}



\noindent \textbf{Baselines.} Apart from the motif-based methods with different similarity functions introduced in Table~\ref{tab:simfunctions}, we also evaluate a set of community detection methods as follows. Resolutions are tuned by parameters listed in Table~\ref{tab:params}. Other parameters are set by default.

\begin{itemize}
    \item {\it Parallel correlation clustering (PCC)}~\cite{shi2021scalable} . This method greedily optimizes the LambdaCC objective proposed by Veldt et al.~\cite{veldt2018correlation}. Authors optimize their parallel framework with various options, e.g., asynchronous setting, multi-level refinement, etc. 
    \item {\it Parallel modularity optimization (PMod)}~\cite{shi2021scalable}. The framework of Shi et al.~\cite{shi2021scalable} also supports optimizing in parallel a modularity objective proposed by Reichardt and Bornholdt~\cite{reichardt2006Statistical}, which is a generalization of the standard modularity metric used by Louvain~\cite{Girvan2002community}. 
    \item {\it Label propagation communities (LPC)}. We use the implementation provided by Networkx, which is a semi-synchronous label propagation method~\cite{cordasco2010community}.
    \item {\it Infomap}. The package provided by~\cite{mapequation2023software} is written in C++ and wrapped with Python APIs.
    \item {\it Spectral clustering\footnote{\url{https://scikit-learn.org}.} (Spec)}~\cite{ng2001spectral}, and higher-order spectral clustering using triangles (H-Spec)~\cite{benson2016higher}.
\end{itemize}

\begin{table}[th]
    \centering
        \caption{  \label{tab:params}Parameters and their respective ranges explored for each method to control resolution when evaluating the precision-recall tradeoff.}
    \begin{tabular}{c|c|c} 
      Method   &  Parameter & Range (start, end, step size) \\ \midrule
       TW  & $\delta$ & (-30,0,2) \\ 
       Tectonic  & $\delta$ & (0,0.3,0.02) \\ 
       $K_3$  & $\delta$ & (0,15,1) \\ 
       PCC  & $\lambda$ & (0.01,0.96,0.05) \\ 
       PMod  & $\gamma$ & $2m\times (0.1,1,0.05)$ \\ 
       Spectral clustering & \# of clusters & (100,1300,300)
    \end{tabular}

\end{table} 

\begin{table*}
\centering
\footnotesize
\caption{Time and memory usage of TW, PCC and PMod on rMat random graphs ranged by density and edge size. We highlight the cell corresponding to the method with the best performance under each setting.}
\begin{tabular}{l|l|l|lll|lll}
                              &      &      & \multicolumn{3}{c}{Time(second)}                                                                        & \multicolumn{3}{c}{Memory(GB)}                      \\ \midrule
                              & n    & m    & TW                             & PCC                            & PMod                          & TW                            & PCC    & PMod   \\ \midrule
                              & 2M   & 10M  & \cellcolor{green!30}1.77   & 5.51                           & 6.63                          & \cellcolor{green!30}0.6   & 2.37   & 2.47   \\
                              & 20M  & 100M & \cellcolor{green!30}20.05  & 53.39                          & 111.72                        & \cellcolor{green!30}6.19  & 21.63  & 26.13  \\
\multirow{-3}{*}{very sparse ($m=5n$)} & 200M & 1B   & \cellcolor{green!30}164.19 & 700.67                         & 1759.88                       & \cellcolor{green!30}64.07 & 219.08 & 284.36 \\ \midrule
                              & 200K & 10M  & 4.96                           & 4.24                           & \cellcolor{green!30}2.85  & \cellcolor{green!30}0.52  & 1.24   & 1.67   \\
                              & 2M   & 100M & 71.06                          & 64.70                          & \cellcolor{green!30}42.94 & \cellcolor{green!30}5.39  & 12.75  & 15.87  \\
\multirow{-3}{*}{sparse ($m=50n$)}      & 20M  & 1B   & \cellcolor{green!30}566.64 & 1065.29                        & 978.40                        & \cellcolor{green!30}55.99 & 205.49 & 197.33 \\ \midrule
                              & 46K  & 10M  & 13.91                          & \cellcolor{green!30}1.18   & 2.25                          & \cellcolor{green!30}0.46  & 1.02   & 1.54   \\
                              & 215K & 100M & 378.11                         & \cellcolor{green!30}13.68  & 22.16                         & \cellcolor{green!30}4.73  & 9.07   & 12.56  \\
\multirow{-3}{*}{dense ($m=n^{3/2}$)}       & 1M   & 1B   & 9492.5                         & \cellcolor{green!30}131.15 & 252.99                        & \cellcolor{green!30}48.73 & 93.96  & 127.16
\end{tabular}
\label{tab:syn}
\end{table*}

\noindent \textbf{Machine Specs.} We run most experiments on cloud instances with two 16-core 2.8 GHz Intel Gold 6242 processors and 384 GB of main memory. Our code is written in C++. We compile our programs with g++ and the -O3 flag and use an efficient work-stealing scheduler~\cite{Blelloch2020parallel}. Max memory usage is reported via \textit{getrusage} function from the Standard C library of Linux. We also have a sequential implementation of most methods written in Python that can be run on graphs except for graphs with more than 100M edges. Our Git repo can be found 
\href{https://github.com/tsourakakis-lab/parallel-motif-communities}{here}. 


\subsection{Scalability.}

We evaluate the scalability of our parallel framework based on the time and memory usage on both synthetic and real-world graphs. 

\noindent \textbf{\rmat random graphs.} We generate synthetic \rmat graphs~\cite{Chakrabarti2004RMATAR} with varying sizes and densities. Specifically, the number  of edges ranges in $\{10^7,10^8,10^9\}$ and the densities vary from  very sparse ($m=5n$), sparse ($m=50n$) to dense ($m=n^{3/2}$). The seed parameters of the $2\times 2$ matrix $A=[a_{ij}]$ are fixed as $a_{11}=0.45$, $a_{12}=a_{21}=0.15$, and $a_{22}=0.25$.

We report in Table~\ref{tab:syn} the running time and memory usage of our framework and two highly scalable parallel community detection methods PCC and PMod. The number of threads are set as 16 for all methods, and the method with the best performance is highlighted under each setup. The motif-based framework shows up to $10\times$ speedups compared to PMod on the very sparse regime, being comparable to both baselines in the sparse regime, and under-performs in the dense regime. This is expected with the \TW similarity function, as our framework computes for each edge the number of triangles and wedges it participates in.  In this type of regimes practitioners reside on sparsification methods such as Doulion~\cite{tsourakakis2009doulion}
.  On the other hand, our framework is consistently the most memory-efficient, using around $3\times$ less memory usage.

\noindent \textbf{Real-world graphs.} We report the running time of our parallel framework and two baselines on real-world networks in Figure~\ref{fig:runtime}. The number of workers are ranged in $\{1,2,4,8,16,32\}$. The running time of our framework is shorter compared to both baselines on the sparse networks, i.e., Amazon, DBLP and Youtube. In addition, both components of our framework are highly parallelizable, reflected in the almost linear relation between the number of workers and the running time, as indicated by the dashed line that shows ideal linear speedups as a function of the workers. Such high parallelization supports the efficiency of our framework even on large and dense graphs. For networks like Commoncrawl and Friendster, the execution times are highly reduced by increasing the number of works, gradually becoming comparable to the competitors. 

Furthermore, we display the relation between memory usage and the graph size in Figure~\ref{fig:runtime} (h). While all methods show linear memory usage increment with respect to the number of edges, the baselines incur $1.6\times$ to $8\times$  memory overheads across datasets.

\begin{figure*}
    \centering
    \begin{subfigure}[]{0.22\linewidth}
        \centering
        \includegraphics[width=\linewidth]{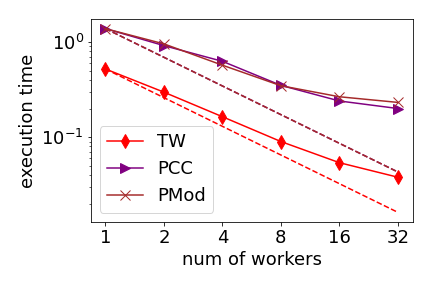}
        \caption{Amazon}
    \end{subfigure} \hfill
    \begin{subfigure}[]{0.22\linewidth}
        \centering
        \includegraphics[width=\linewidth]{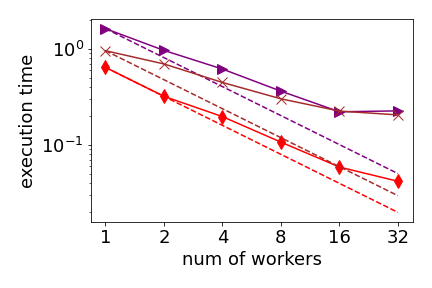}
        \caption{DBLP}
    \end{subfigure} \hfill
    \begin{subfigure}[]{0.22\linewidth}
        \centering
        \includegraphics[width=\linewidth]{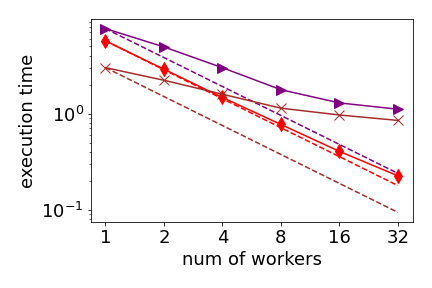}
        \caption{Youtube}
    \end{subfigure} \hfill
    \begin{subfigure}[]{0.22\linewidth}
        \centering
        \includegraphics[width=\linewidth]{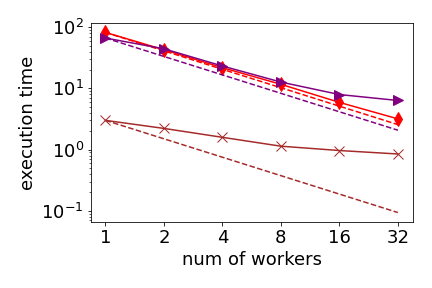}
        \caption{LiveJournal}
    \end{subfigure} \hfill
    \begin{subfigure}[]{0.22\linewidth}
        \centering
        \includegraphics[width=\linewidth]{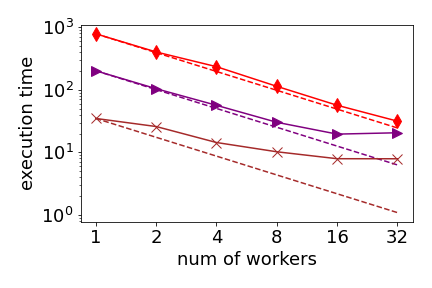}
        \caption{Orkut}
    \end{subfigure} \hfill
    \begin{subfigure}[]{0.22\linewidth}
        \centering
        \includegraphics[width=\linewidth]{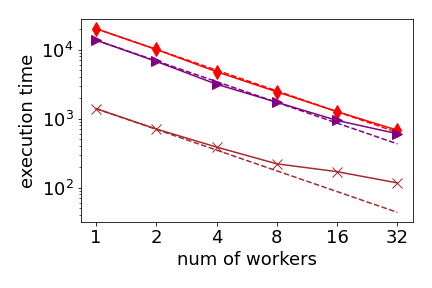}
        \caption{Friendster}
    \end{subfigure}\hfill
    \begin{subfigure}[]{0.22\linewidth}
        \centering
        \includegraphics[width=\linewidth]{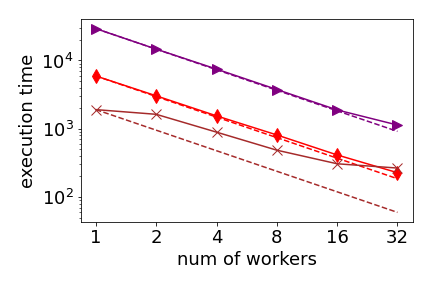}
        \caption{Commoncrawl}
    \end{subfigure} \hfill
    \begin{subfigure}[]{0.22\linewidth}
        \centering
        \includegraphics[width=\linewidth]{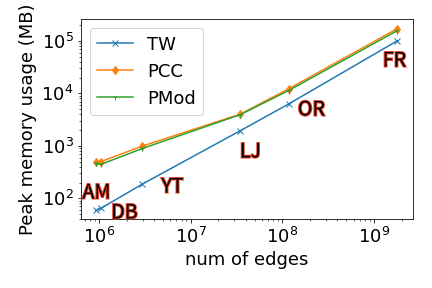}
        \caption{Memory usage}
    \end{subfigure} 
    \caption{(a)-(g) Running time on real-world graphs. Dashed lines represent the sequential running time divided by the number of workers, i.e., expected running time without parallel cost. (h) Memory usage on SNAP graphs.}
    \label{fig:runtime}
\end{figure*}


\begin{figure*}[h]
    \centering
    \begin{subfigure}[]{0.24\linewidth}
        \centering
        \includegraphics[width=\linewidth]{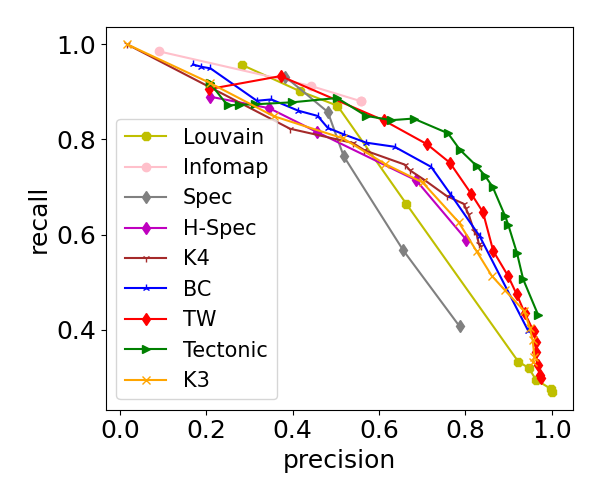}
        \caption{Gavin}
    \end{subfigure} 
    \begin{subfigure}[]{0.24\linewidth}
        \centering
        \includegraphics[width=\linewidth]{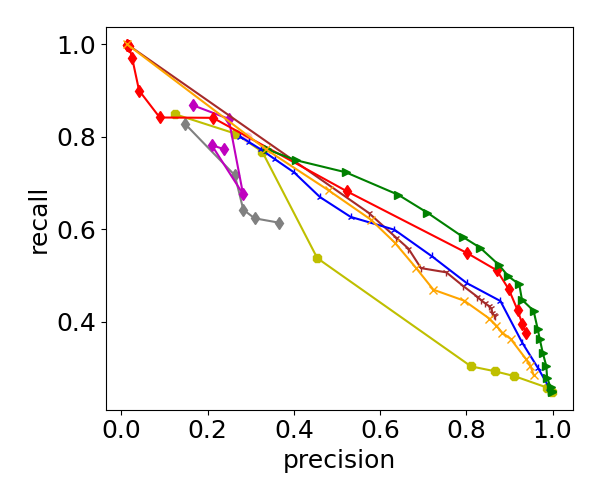}
        \caption{EXTE}
    \end{subfigure}
    \begin{subfigure}[]{0.24\linewidth}
        \centering
        \includegraphics[width=\linewidth]{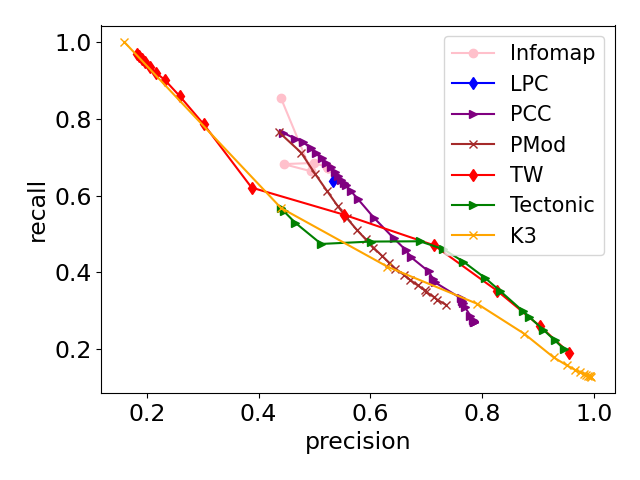}
        \caption{Amazon}
    \end{subfigure} \hfill
    \begin{subfigure}[]{0.24\linewidth}
        \centering
        \includegraphics[width=\linewidth]{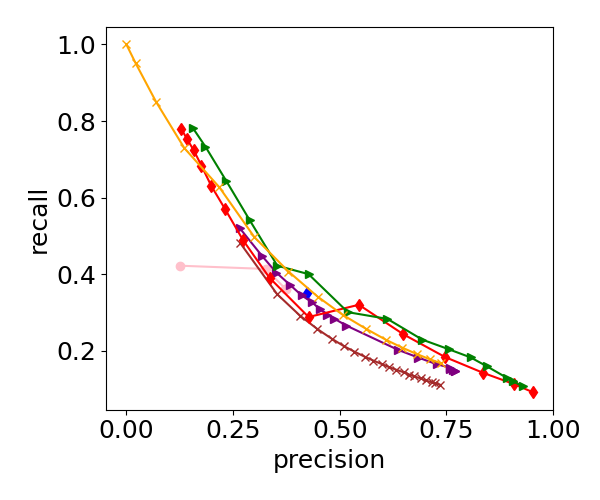}
        \caption{DBLP}
    \end{subfigure} \hfill
    \begin{subfigure}[]{0.24\linewidth}
        \centering
        \includegraphics[width=\linewidth]{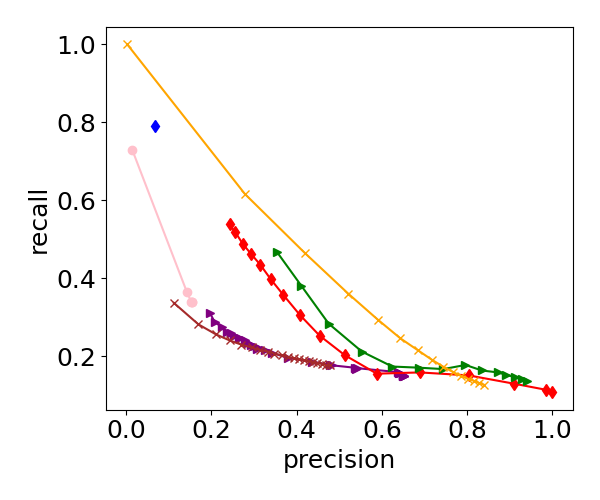}
        \caption{Youtube}
    \end{subfigure}
    \begin{subfigure}[]{0.24\linewidth}
        \centering
        \includegraphics[width=\linewidth]{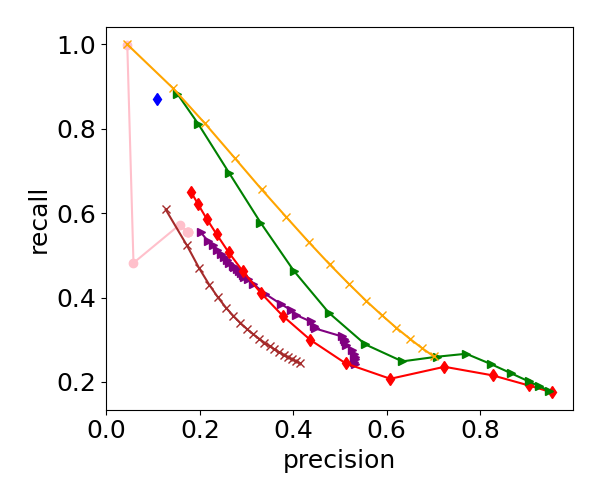}
        \caption{LiveJournal}
    \end{subfigure} \hfill
    \begin{subfigure}[]{0.24\linewidth}
        \centering
        \includegraphics[width=\linewidth]{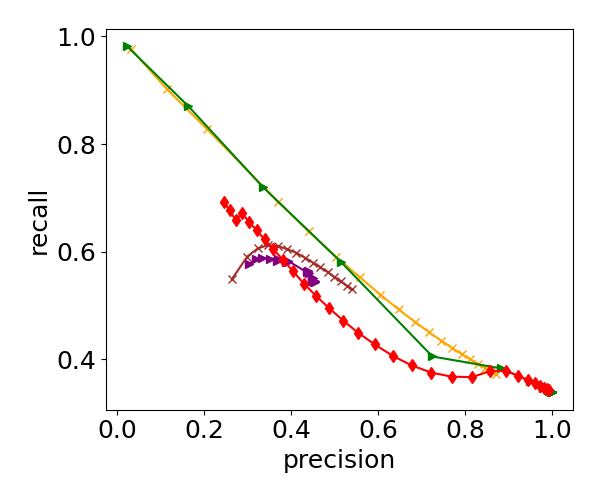}
        \caption{Orkut}
    \end{subfigure} \hfill
    \begin{subfigure}[]{0.24\linewidth}
        \centering
        \includegraphics[width=\linewidth]{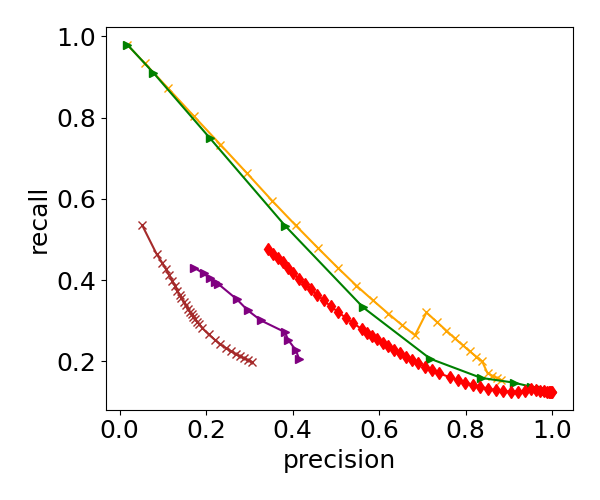}
        \caption{Friendster}
    \end{subfigure}
    \caption{Precision-recall tradeoff on graphs with groundtruth communities.}
    \label{fig:precision}
\end{figure*}

\subsection{Quality Evaluation via Groundtruth Communities}

\noindent \textbf{PPI networks.} Two PPI networks we test are small in size and thus can be used to evaluate methods that are not scalable, e.g., (higher-order) spectral clustering and our motif-based framework with betweenness centrality as the similarity function. We demonstrate the precision-recall tradeoff of methods in Figure~\ref{fig:precision}(a), (b). We observe \tec followed by \TW shows the top performance among all the baselines. While Infomap achieves one of the best performances the recall-precision tradeoff is less flexible than the other methods. This shows the high sensitivity of Infomap concerning its resolution parameter.

\noindent \textbf{SNAP networks.}
Figure~\ref{fig:precision} (c)-(h) shows the precision-recall tradeoff of various community detection methods on SNAP datasets with groundtruth communities using all groundtruth communities. No method dominates all other methods across all networks. For example, in Amazon, \TW and PCC achieve the top performance, followed by Infomap and label propagation. On the other hand, in Youtube and LiveJournal, our motif-based framework with triangle participation as the similarity function provides better accuracy than any other method. We observe that PCC and PMod exhibit a major performance downgrade compared to the original paper by Shi et al.~\cite{shi2021scalable}. Notice in conjunction with Figure~\ref{fig:dblp_com_density_compare} that many groundtruth communities are sparse. Thus, PCC and PMod, which typically find communities with high internal edge densities, tend to underperform.


In addition, we observe that the performance of both Tectonic and \TW drop when tuning the resolution from high to low. Specifically, both methods achieve high accuracies when the resolution is high (i.e., in the regime of low recall and high precision) but lose the advantage when the threshold $\delta$ (see Algorithm~\ref{alg:framework}) is reduced up to some extent. Such a phenomenon can be observed in all the SNAP datasets and is especially obvious in LiveJournal, where both precision and recall metrics drop significantly when the precision is in the range of $0.6-0.7$. We look into this phenomenon in greater detail in the next subsection.

\subsection{Threshold Selection}

\begin{figure*}[]
    \centering
    \begin{tabular}{p{0.02\linewidth}cccc}
       & DBLP & Government & New-sites & Co-purchase \\
        \begin{tabular}{l}
    \TW
  \end{tabular} & \begin{tabular}{c}\includegraphics[width=0.2\linewidth]{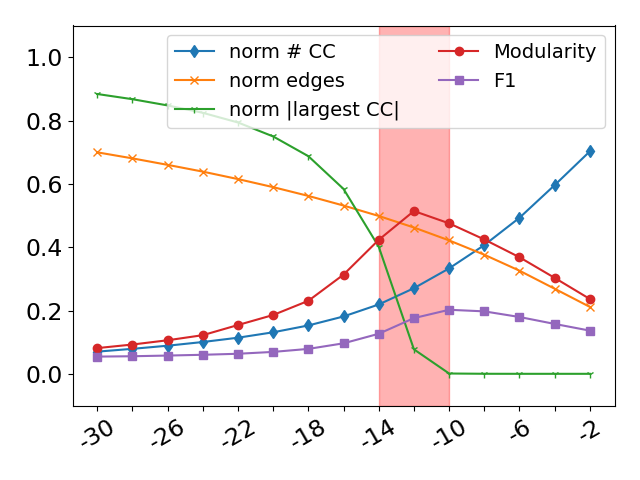} \end{tabular}  &  \begin{tabular}{c}\includegraphics[width=0.2\linewidth]{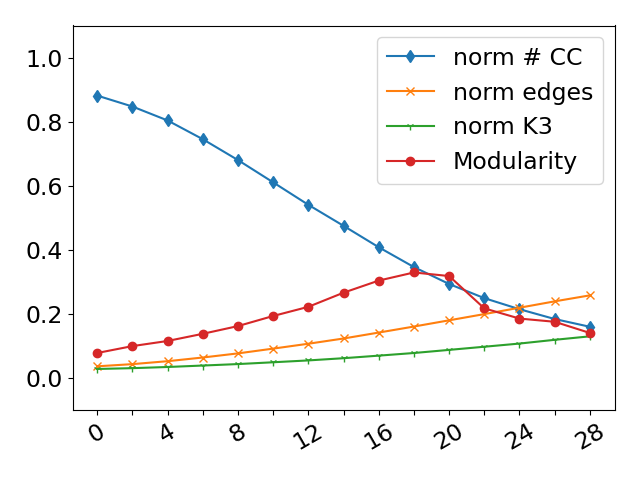}  \end{tabular} & \begin{tabular}{c} \includegraphics[width=0.2\linewidth]{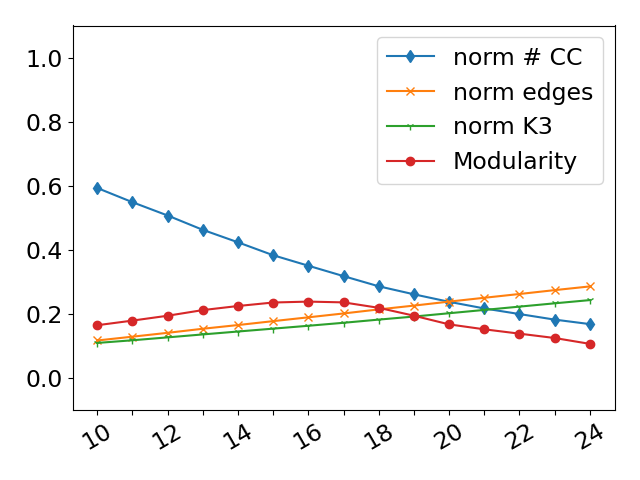}  \end{tabular} & \begin{tabular}{c} \includegraphics[width=0.2\linewidth]{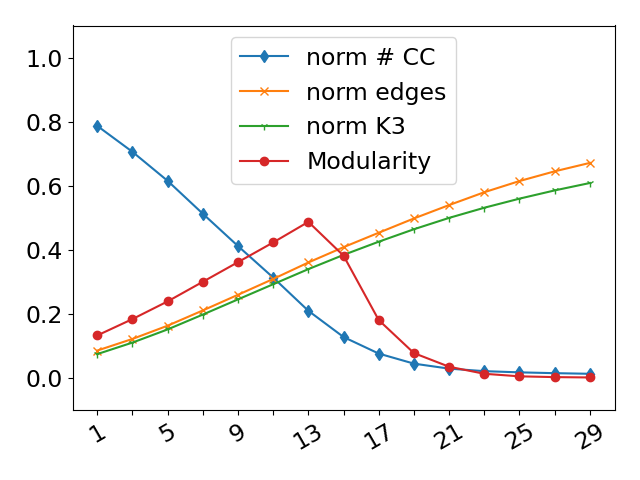}  \end{tabular} \\
        \begin{tabular}{l}
    Tec.
  \end{tabular} &  \begin{tabular}{c}\includegraphics[width=0.2\linewidth]{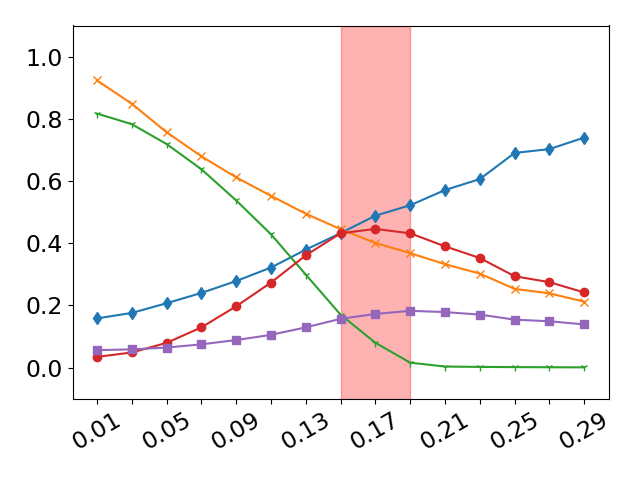} \end{tabular} &  \begin{tabular}{c}\includegraphics[width=0.2\linewidth]{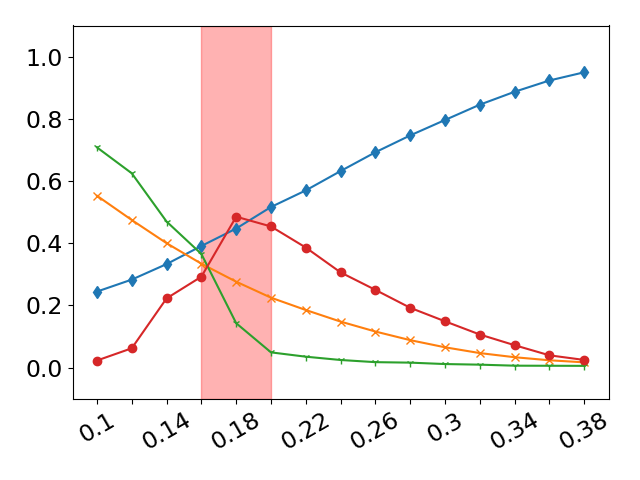} \end{tabular}&  \begin{tabular}{c}\includegraphics[width=0.2\linewidth]{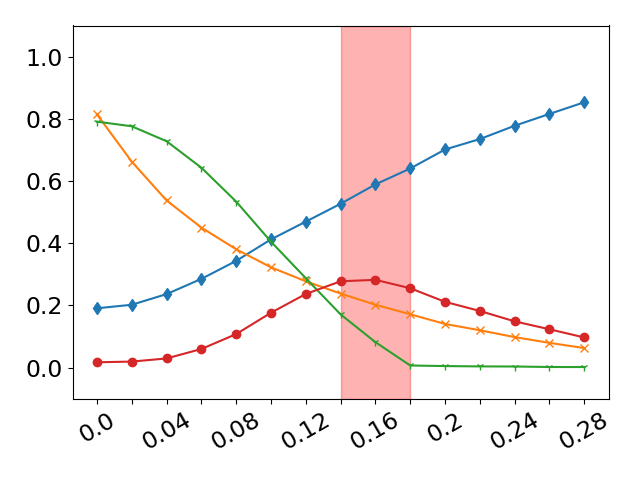}\end{tabular} &  \begin{tabular}{c}\includegraphics[width=0.2\linewidth]{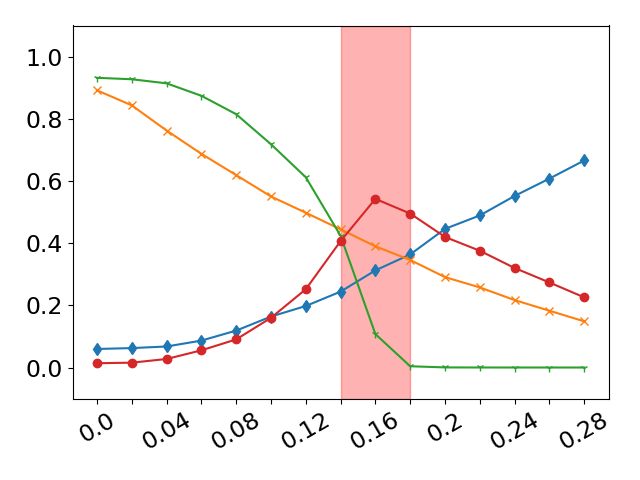}\end{tabular} \\
        \begin{tabular}{l}
    $K_3$
  \end{tabular} &  \begin{tabular}{c}\includegraphics[width=0.2\linewidth]{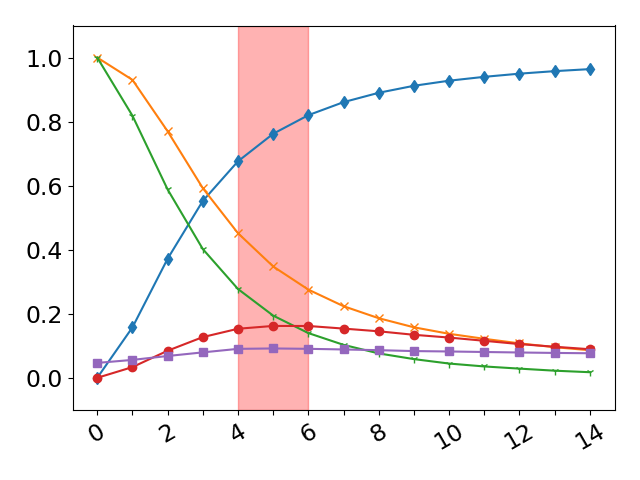}\end{tabular} &  \begin{tabular}{c}\includegraphics[width=0.2\linewidth]{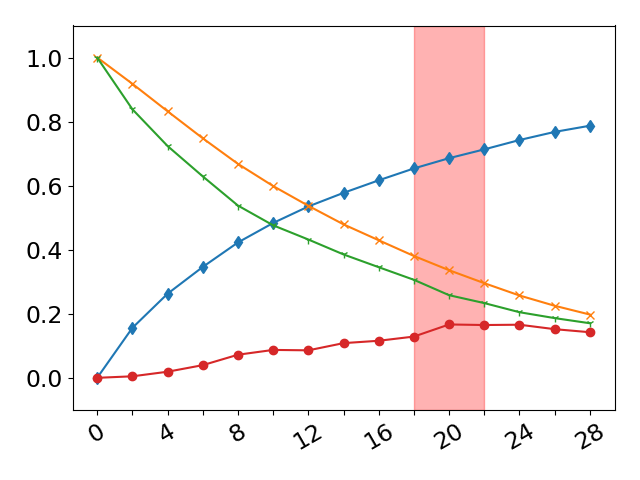} \end{tabular} &  \begin{tabular}{c}\includegraphics[width=0.2\linewidth]{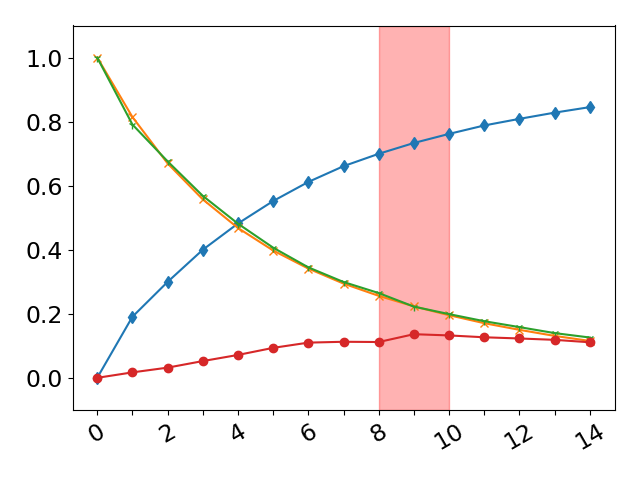} \end{tabular} &  \begin{tabular}{c}\includegraphics[width=0.2\linewidth]{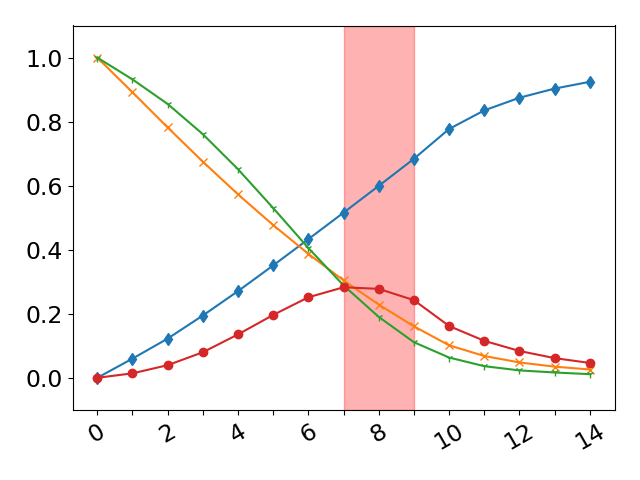} \end{tabular}\\
    \end{tabular}
    \caption{Threshold selection of TW, Tectonic (Tec.), and $K_3$. Figures present the threshold value $\delta$ (x-axis) versus normalized output community statistics (y-axis), with the red annotated threshold $\delta$ values selected by our rules of thumb.}
    \label{tab:selection}
\end{figure*}

We observe that for all the motif-based methods, the performance dropped significantly after reducing the resolution. To better understand the reason behind this, we show in Figure~\ref{tab:selection} various graph metrics versus the threshold value for \TW, Tectonic, and $K_3$ on graphs with and without groundtruth. To be able to compare across networks with different numbers of nodes and edges, we report modularity together with normalized versions of the graph statistics. In particular, we report the number of connected components normalized by $n$ (norm \# CC), the number of edges left after the edge removal normalized by $m$ (norm edges), and the size of the largest connected component normalized by $n$ (norm |largest CC|). On DBLP where the groundtruth communities are available, we also present the F1 score.

We observe consistently non-trivial jumps in the size of the largest connected component (CC) for \tec and \TW when reducing the threshold. In addition, these jumps match perfectly with the performance collapse observed in the previous subsection, which provides a reason for such a phenomenon.  Initially, when the resolution is high, i.e., the threshold is set high for a motif-based method, the input graph is disconnected into small subgraphs with decent precisions. However, a large network has different properties and structures across subgraphs. Specifically, some components in a network can possess higher density or connectivity than others, e.g., in a global road network, the component of Frankfurt can be much denser than the one in Sicily. When the resolution is reduced to some extent, some dense components cannot get disconnected and thus will be returned as a whole, while the sparse components are still split into accurate communities. Therefore, using a universal threshold for all the edges can cause an accuracy issue, especially when clustering with low resolution on a large network. 

On the other hand, the motif-based community detection methods provide impressive performance if we can avoid the relatively low-resolution regime. As we run the framework with {\it decreasing} thresholds, {\it a sudden increment in the size of the largest community} output by a motif-based method can serve as a good indicator. This can be used as a rule of thumb for picking threshold values for our method, which is an obstacle to using most community detection methods in practice.
In cases where we do not observe any unusual shift in the size of the largest community, e.g., thresholding based on $K_3$, we recommend {\it picking a threshold that maximizes the modularity}. As presented in Table~\ref{tab:selection}, both rules of thumb return very similar thresholds for Tectonic and \TW on various datasets.




We argue that while these rules may not lead to "optimal" community structures, where the optimality of a solution to the community detection problem is never easy to define and varies by application, our rules usually yield a decent output for further downstream tasks.

\section{Conclusion}
The framework of motif-aware community detection offers a blend of simplicity and practical efficacy. It operates by reweighting each edge based on a similarity measure between its endpoints and then discarding edges using a threshold. We introduce a novel edge similarity measure, \TW, which is based on counting triangles and wedges, both of which are extensively studied in various computational models, including the PRAM model. Our similarity measure is intuitively appealing, and we provide both theoretical and empirical support for its practical application in community detection. Specifically, we demonstrate that while \tec has shown promising empirical results, it can fail to identify well-defined communities, whereas \TW excels in this regard. Furthermore, we present an open-source parallel framework designed for scalable motif-based community detection methods. We assess the performance of our framework on both synthetic and real-world graphs, with and without known ground truth communities. Our experimental results indicate that our framework either outperforms or is on par with state-of-the-art baselines, such as Louvain and LambdaCC. Additionally, we suggest practical guidelines for selecting an effective threshold for real-world applications.

\bibliographystyle{splncs04}
\bibliography{ref}


\end{document}